\newcommand{\bigO}{O}
\newcommand{\seed}{\bar s}
\newcommand{\rk}{\phi_q}
\newcommand{\rks}{\phi_{q,\seed}}
\newcommand{\rkspow}{\varphi_{q,\seed}}
\newcommand{\modd}{\ \mathrm{mod}\ }
\newcommand{\divv}{\ \mathrm{div}\ }
\newcommand{\packedtime}{t_{n,\sigma}}
\begin{document}

\title{Optimal Substring Equality Queries with Applications to Sparse Text Indexing}  
%
%
\author{Nicola Prezza
\orcidID{0000-0003-3553-4953}} 
%
\authorrunning{N. Prezza}
%
\institute{Luiss Guido Carli, Rome, Italy
\email{nprezza@luiss.it}}
\maketitle              

\begin{abstract}
	
	We consider the problem of encoding a string of length $n$ from an integer alphabet of size $\sigma$ so that \emph{access} and \emph{substring equality} queries (that is, determining the equality of any two substrings) can be answered efficiently. 
	Any uniquely-decodable encoding supporting \emph{access} must take $n\log\sigma + \Theta(\log (n\log\sigma))$ bits. We describe a new data structure matching this lower bound when $\sigma\leq n^{\bigO(1)}$ while supporting both queries in optimal $\bigO(1)$ time. Furthermore, we show that the string can be overwritten in-place with this structure.
	The redundancy of $\Theta(\log n)$ bits and the constant query time break \emph{exponentially} a lower bound that is known to hold in the read-only model.

	Using our new string representation, we obtain the first in-place subquadratic (indeed, even sublinear in some cases) algorithms for several string-processing problems in the \emph{restore} model: the input string is rewritable and must be restored before the computation terminates. In particular, we describe the first in-place subquadratic Monte Carlo solutions to the sparse suffix sorting, sparse LCP array construction, and suffix selection problems. 
	With the sole exception of suffix selection, our algorithms are also 
	the first running in \emph{sublinear} time for small enough sets of input suffixes. 
	Combining these solutions, we obtain the first sublinear-time Monte Carlo algorithm for building the sparse suffix tree in compact space.
	We also show how to derandomize our algorithms using small space. 
	This leads to the first Las Vegas in-place algorithm computing the full LCP array in $\bigO(n\log n)$ time and to the first Las Vegas in-place algorithms solving the sparse suffix sorting and sparse LCP array construction problems in $O(n^{1.5}\sqrt{\log \sigma})$ time. Running times of these Las Vegas algorithms hold in the worst case with high probability.
\end{abstract}

\keywords{Substring-equality queries, In-place, Suffix sorting}

\newpage

\section{Introduction and related work}

In this paper we consider the problem of finding an encoding for a string $S$ efficiently supporting the following queries:

\begin{itemize}
	\item \emph{Random access}: return any $S[i]$.
	\item \emph{Substring-equality}: test whether any two substrings $S[i,\dots,i+\ell]$ and $S[j,\dots,j+\ell]$ are equal. 
\end{itemize}

It is well-known (see for example \cite[Thm. A]{bentley1976almost}) that any uniquely-decodable code for an integer $X$ must use at least $\log X + \Theta(\log\log X)$ bits of space. By viewing a string $S\in [0,\sigma-1]^n$ as a number of $n$ digits in base $\sigma$, we conclude that any uniquely-decodable encoding for $S$ must use at least $\log(\sigma^n) + \Theta(\log\log(\sigma^n)) =  n\log\sigma + \Theta(\log(n\log\sigma))$ bits in the worst case. Given that encodings of this size do exist (for example, Elias codes), the next interesting question is: what kind of queries can we answer efficiently within this space? As it turns out, this lower bound is not achievable for all possible queries. Partial sums on a bitvector of length $n$ (also known as \emph{rank}), for example, require $n + n/\log^{O(t)}n$ bits of space to be solved within $O(t)$ time~\cite{puatracscu2010cell}. Other known data structure trade-offs include evaluating boolean polynomials~\cite{gal2003cell}, text searching, evaluating permutations, and select on binary matrices~\cite{golynski2009cell}. 
On the positive side, Dodis et al.~\cite{dodis2010changing} showed that this is not the case for the fundamental random access and update queries, that is, retrieving and modifying characters of $S$: they provided an optimal encoding answering these queries in $O(1)$ time on an arbitrary finite alphabet. Their structure requires access to $O(\log n)$ word constants (not depending on $S$) of $\Theta(\log n)$ bits each.

While the above lower bounds hold in the encoding setting --- where $S$ is replaced by a data structure --- , an alternative setting is the one where we augment a read-only string with some auxiliary data structures that can access it. 
In this model, an interesting result relevant to our paper is constituted by longest common extension (LCE) queries: to find the longest common prefix between any two string's suffixes.
For this problem, Kosolobov~\cite{kosolobov2017tight} showed that, letting $s(n)$ be the space redundancy (in bits) on top of $S$ and $t(n)$ be the query time, the relation $s(n)t(n) \in \Omega(n\log n)$ must hold. 
This lower bound is matched by a data structure by Bille et al.~\cite{bille2014time,bille2015longest,bille2013sparse}, who showed how to augment a read-only string over alphabet of size $n^{O(1)}$ with a data structure of $\bigO(n/\tau)$ words of $\Theta(\log n)$ bits supporting LCE queries in $\bigO(\tau)$ time.  
It is worth to note that Kosolobov's lower bound holds in the cell-probe model where each character occupies a separate cell and with alphabets of size at least $2^{\Theta(s(n)/n)}$. 
Indeed, packed access can break the lower bound: as Kempa and Kociumaka recently showed in~\cite{Kempa19}, $s(n)t(n) \in O(n\log\sigma)$ can be achieved in the word-RAM model with words of $\Theta(\log n)$ bits and alphabet of size $n^{O(1)}$.
Another recent result breaking the lower bound is represented by the structure of Tanimura et al.\cite{tanimura17}, which for $\sigma \leq 2^{o(\log n)}$ achieves $s(n)t(n) = o(n\log n)$. 
We note that, since substring equality queries can be used to solve LCE queries by binary search, Kosolobov's result implies the lower bound $s(n)t(n) \in \Omega(n)$ for substring equality queries in the read-only model. 
In particular, in order to achieve $O(1)$ words ($O(\log n)$ bits) of redundancy, $\Omega(n/\log n)$ query time is needed (again, packed computation might break this bound by a logarithmic factor). 
One may wonder whether Kosolobov's lower bound holds also in the \emph{restore} model where the string is allowed to be overwritten, provided that it is restored before the computation terminates. This model has been considered by Fischer et al.~\cite{fischer2016deterministic}, who managed to improve the running time and working space used by Bille et al. to build their data structure (although offering worse query times).
The main result of our paper is to show that in the restore model the situation is indeed much better. 
In particular, we show that in this setting $O(1)$ words of redundancy and $O(1)$ query time are achievable for substring equality queries at the price of using $O(\log n)$ additional word constants of $\Theta(\log n)$ bits that do not depend on $S$. 
Even counting these $O(\log n)$ word constants in the redundancy $s(n)$, our result implies that in the restore model $s(n)t(n) \in O(\log^2 n)$ is achievable. 
If we do not count those constants (which can be shared by all instances of our data structure), we obtain $s(n)t(n) \in O(\log n)$. This improves \emph{exponentially} the lower-bound holding in the read-only model. 
Even better, we show that the string can be replaced (and restored) \emph{in-place} and in optimal packed time with our data structure; for this task, we provide both Monte Carlo and Las Vegas algorithms (the latter use small additional working space). 
As we discuss in the next subsection, these results can be used as a powerful tool to solve in-place and subquadratic (in some cases, even sublinear) time many string processing problems for which no subquadratic solution was previously known.

\subsection{Applications To Sparse Text Indexing}

The first application we consider is in-place \emph{suffix sorting}: to compute the lexicographic order of the string's suffixes.
This is a natural application of our result, since the lexicographic order of two suffixes can be determined by comparing the two characters following their longest common prefix. 
Suffix sorting has been the subject of study of dozens of research articles since the introduction of suffix arrays in~\cite{manber1993suffix,baeza1992new,gonnet1992new}, and is a fundamental step in most of the indexing and compression algorithms developed to date.
The survey~\cite{puglisi2007taxonomy} represents an excellent overview of the main suffix sorting techniques developed in the two decades following the introduction of suffix arrays. 

Relevant to our work are the results of Franceschini and Muthukrishnan~\cite{franceschini2007place} and Li et al. \cite{li2018optimal}. 
The former were the first to show that suffix sorting can be performed in-place in optimal $O(n\log n)$ time on general ordered alphabets. They assume a read-only string and a general  rewritable array $[0,n-1]$ to store the suffix array (i.e., not necessarily encoded as integers of $\lceil\log n\rceil$ bits, even if they consider also this model). In the same model, Li et al. \cite{li2018optimal} improved the running time to $O(n)$ in the case the alphabet's size is $\sigma\in O(n)$. 
Parallel to the study of techniques to sort all suffixes of a string, several authors started considering the problem of efficiently sorting only a \emph{subset} of $b$ string's suffixes~\cite{karkkainen1996sparse, karkkainen2014faster,gawrychowski2017sparse,fischer2016deterministic,bille2013sparse,bille2015longest,karkkainen2006linear}, a fundamental step in the construction of compressed and sparse text indexes~\cite{karkkainen1996sparse} and space-efficient compression algorithms. Among these results, Gawrychowski and Kociumaka~\cite{gawrychowski2017sparse} have been the first to describe an asymptotically optimal solution to the problem, showing that $\bigO(b)$ working space and $\bigO(n)$ running time are achievable on alphabet size $n^{O(1)}$ with a Monte Carlo algorithm. 
They assume a read-only input string and also discuss a Las Vegas algorithm with higher running time. More recently, Birenzwige et al.~\cite{Birenzwige20} presented a Las Vegas algorithm running in optimal $O(n)$ time on read-only strings from an alphabet of size $n^{O(1)}$. To the best of our knowledge, Fischer et al.~\cite{fischer2016deterministic} have been the first (excluding the conference verion~\cite{prezzaSparse} of the present paper) to consider the problem in the restore model. On general alphabet size, their solution uses $O(b)$ working space and can achieve sublinear time on some instances  (in particular, when the number of comparisons needed to sort the suffixes is small). Note that, since all these results can use $O(b)$ words of working space, they do not need to make strong assumptions on the output's representation. In particular, they can use any array representation supporting random access and update operations (this situation changes in the in-place setting, read below).

Interestingly, to date no in-place (i.e. $\bigO(1)$ working space) and $o(n\cdot b)$-time algorithm is known for the general sparse suffix sorting problem. Such an algorithm should take as input a string $S$ and an array $B$ of $b$ positions, and suffix-sort these position using $\bigO(1)$ words of working space on top of $S$ and $B$. Using our new data structures, we fill this gap. All our results hold in the restore model where $S$ and $B$ are stored using any integer representation supporting random access and update operations and $S \in [0,\sigma]^n$, with $\sigma\leq n^{O(1)}$, must be restored before the computation terminates. In this model, we present a Monte Carlo in-place algorithm running in $O(n/\log_\sigma n + b\log^2n)$ worst-case time (assuming packed computation is available; we also provide results in a more general setting). 
For a small enough set of input suffixes, this running time is sublinear.
We also describe a Las Vegas in-place algorithm for the problem running in $O(n^{1.5}\sqrt{\log\sigma})$ time with high probability.

The second problem we consider is that of building in-place the (sparse) Longest Common Prefix array (LCP), that is, the array storing the lengths of the longest common prefixes between lexicographically adjacent suffixes (or a subset of them).
Like suffix-array construction algorithms, LCP construction algorithms have been the subject of several research articles in the last decades (see~\cite{puglisi2008space, gog2011fast} and references therein). As opposed to suffix arrays, in-place algorithms for building the LCP array have been considered only very recently~\cite{LOUZA201714}. The time-gap between solutions for building in-place the (full) suffix array and the LCP is considerable, as the fastest known algorithm for the latter problem, due to Louza et al. runs in $\bigO(n^2)$ time~\cite{LOUZA201714}. This algorithm does not make assumptions on the alphabet and works in the rewritable model. The same bound in the read-only model can be obtained by computing the suffix array in-place with the algorithms of Franceschini and Muthukrishnan~\cite{franceschini2007place} and Li et al. \cite{li2018optimal} and replacing it with the LCP array.
Our major contribution to this problem is a Las Vegas in-place algorithm computing the full LCP array in $O(n\log n)$ time with high probability. 
As far as the sparse LCP array is concerned, all solutions previously discussed that solve LCE queries or that build the sparse suffix tree can be adapted to return the sparse LCP array within the same time-space bounds. In particular, no in-place solution working in $o(n\cdot b)$ time is known. For this problem, we describe a Monte Carlo in-place algorithm running in $O(n/\log_\sigma n + b\log^2n)$ worst-case time and a Las Vegas in-place algorithm running in $O(n^{1.5}\sqrt{\log\sigma})$ time with high probability. 
We remark that our results allow using any sequence representation for the input (provided it supports constant-time access and update queries). In particular, using the representation of Dodis et al.~\cite{dodis2010changing} for both $S$ and $B$, our algorithms use $n\log\sigma + b\log n + \Theta(\log n)$ bits of \emph{total} space (that is, including input, output, and working space). 

To conclude, we consider the \emph{suffix selection} problem: to return the $i$-th lexicographically smallest suffix. In the read-only model, it is known that this problem can be solved in optimal $\bigO(n)$ time and in $\bigO(n)$ words of space on top of the string~\cite{franceschini2007optimal} drawn from an arbitrary ordered alphabet. Considering that the output consists of only one position, this solution is far from being space-efficient. We present the first Monte Carlo in-place solution for the suffix selection problem in the restore model that runs in sub-quadratic time. On alphabets of size $n^{O(1)}$, our algorithm runs in $O(n\log^4 n)$ worst-case time and returns the correct result with high probability.

The work described in this paper is an extension of the data structure presented in~\cite{prezzaSparse} for supporting substring equality queries within $n\lceil \log\sigma \rceil + \Theta(\log n)$ bits of space. We generalize the space-optimality of this result to alphabet sizes not being necessarily a power of two. As an immediate result of this generalization, we can abstract from the input representation and present our result as a string transformation. 
With respect to~\cite{prezzaSparse}, we also improve the running times of our in-place algorithms to sublinear for small enough input suffix sets and present efficient Las Vegas algorithms in the sparse setting. 

\subsection{Notation and Model of Computation}\label{sec:notation}

We assume that the input string $S \in [0,\sigma-1]$, with $\sigma \leq n^{O(1)}$, is rewritable and that the size of the data structure used to represent it does not depend on its content, that is, updating the characters of $S$ does not change the space of the data structure. For example, a word-packed string or the representation described in~\cite{dodis2010changing} satisfy this requirement. 
Intuitively, we need this requirement since in our solutions we overwrite the content of $S$ and, in order for this process to run in-place, the size of the string does not have to increase.
All our algorithms work in the \emph{restore} model: the string $S$ is allowed to be overwritten and must be restored before the computation terminates. 
We describe our results in the word RAM model of computation with word size $w = \lceil \log n \rceil$ bits; as we show below, this is not restrictive.
If not otherwise specified, logarithms are in base $2$.
Since we make use only of integer additions, multiplications, modulo, and bitwise operations (masks, shifts), we assume that we can simulate a memory word of size $w'=c\cdot w$ for any constant $c$ with only a constant slowdown in the execution of these operations. Subtractions, additions and multiplications between $(c\cdot w)$-bits  words take trivially constant time by breaking the operands in $2c$ digits of $w/2$ bits each and use schoolbook's algorithms (i.e. $\bigO(c)$-time addition and $\bigO(c^2)$-time multiplication). The modulo operator $a\modd q$ (with $q$ fixed) can be computed as $a\modd q = a - \lfloor a/q \rfloor\cdot q$. Similarly, computing $\lfloor a/q \rfloor$ can be done in $\bigO(c^2)$ time using Knuth's long division algorithm~\cite{knuthbook} (see also~\cite{Hansen1994}).
Bitwise operations on $(c\cdot w)$-bits  words can easily be implemented with $c$ bitwise operations between $w$-bits words.
We enumerate string positions starting from 0.
For a more compact notation, we write $S[i,j]$ to denote both $S$'s substring starting at position $i$ and ending at position $j$, and the integer $S[i]S[i+1]\dots S[j]$, each $S[i]$ being a digit in base $\sigma$. If $j<i$, $S[i,j]$ denotes the empty string $\epsilon$ or the integer $0$. The use (string/integer) will be clear from the context.
We assume that we can draw integers uniformly distributed in a given interval in constant time and using $\bigO(1)$ words of working space.
W.h.p. (with high probability) means with probability at least $1 - n^{-c}$ for an arbitrarily large constant $c$.
The Karp-Rabin hash function~\cite{karp1987efficient} takes as input a string $S\in [0,\sigma-1]^*$, converts it into its natural representation as a number $s$ composed of $|S|$ digits in base $\sigma$ (i.e. the $i$-th most significant digit of $s$ is equal to $S[i]$), and returns $s\modd q$, where $q$ is a prime number. Since in this paper we view strings from $[0,\sigma-1]^*$ as numbers in base $\sigma$, we can define the Karp-Rabin fingerprint $\rk(S)$ of a string $S$ simply as
$$
\rk(S) = S\modd q.
$$

Let now $\seed\in \mathbb{N}$ be a \emph{seed}. With $\rks$ we indicate the \emph{shifted Karp-Rabin function}
$$
\rks(S) = (S+\seed)\modd q = (\rk(S) +\seed)\modd q.
$$

\section{In-Place Substring Equality Queries}\label{sec:encoding}

Let $S[0,n-1]$ be our input string. The intuitive idea of our strategy is to create a string $S'[0,n-1]$ on the same alphabet as follows. Let $\tau \in \Theta(\log_\sigma n)$ be a block size, and $q$ be a random prime of $\tau+1$ digits in base $\sigma$. We conceptually divide $S$ and $S'$ in contiguous non-overlapping blocks of $\tau$ digits each (assume for simplicity that $\tau$ divides $n$). The key insight is to choose $q$ so that all Karp-Rabin fingerprints of the $S$-prefixes ending at block boundaries have only $\tau$ digits. At this point, each block $B'$ of $S'$ contains the $\tau$ digits of the Karp-Rabin fingerprint of the corresponding $S$-prefix. Since we turn global information (prefixes) of $S$ into local information (characters) of $S'$, we are able to answer substring equality queries by accessing only $\tau \in \Theta(\log_\sigma n)$ contiguous locations of $S'$ (which translates to $\bigO(1)$ accesses when $\Theta(\log_\sigma n)$ characters are packed in a single word). By using well-established techniques, we can moreover choose $q$ so that no $S$-substrings used to solve substring equality queries generate collisions through our hash function with high probability. 

In Definition \ref{def:LCE transform} we formally introduce a family of string transforms. In the next subsections we show how to choose a transform from this family with the following properties: 
\begin{itemize}
	\item \emph{Invertible}: we can reconstruct $S$ from its transform and the corresponding parameters.
	\item \emph{Monte Carlo}: the transformed string is invertible \emph{and} with high probability it can be used to answer substring equality queries on $S$, \emph{without accessing $S$}.
\end{itemize}

We will moreover introduce a data structure corresponding to our Monte Carlo transform and show how to derandomize it with a Las Vegas construction algorithm. 


Let $\tau, q, \seed$ be three non-negative integers satisfying the following conditions: $\tau = \Theta(\log_\sigma n)$, $\sigma^{\tau} \leq q < \sigma^{\tau+1}$, and $\seed \leq q$. 
These three integers will have the following intuitive role: 

\begin{itemize}
	\item $\tau$ is a block size. We will divide $S$ (as well as the transformed string $S'$) in contiguous non-overlapping blocks of size $\tau$.
	\item $q$ and $\seed$ are the two parameters of a shifted Karp-Rabin hash function (see Section \ref{sec:notation}). By randomly choosing $q$ and $\seed$ according to an appropriate distribution, we will make sure that the transformed string $S'$ is invertible and that the hash function is collision-free with high probability.
\end{itemize}

Let  $P_j = \rks(S[0,j\cdot \tau-1])$, for $j = 1,\dots, n/\tau$, be $n/\tau$ integers composed of $\tau+1$ digits in base $\sigma$ each (if less, left-pad with zeros), and denote with $P_j[k]$, $0\leq k \leq \tau$, the $k$-th most significant digit of $P_j$.
For simplicity, we assume that $\tau$ divides $n$. 
This will not affect the space of our final encoding:
if $\tau$ does not divide $n$, then we virtually pad the string with $n\modd \tau$ digits equal to zero. For our choice of $\tau$, this results in a waste of $\bigO(\log_\sigma n)$ characters, which can be stored in packed form as a number in base $\sigma$ in $\bigO(1)$ words while still supporting packed access and update queries with a constant number of accesses to memory.
In Definition \ref{def:LCE transform} we introduce our family of string transforms. For brevity, we call such functions \emph{eq-transforms}.

\begin{definition}[eq-transform]\label{def:LCE transform}
	The eq-transform $\lambda_{q, \seed, \tau}(S)\in [0,\sigma-1]^n$ of $S$ is defined as follows. For every $i=0,\dots, n-1$:
	$$
	\lambda_{q, \seed, \tau}(S)[i] = P_{(i\divv\tau)+1}[(i\modd \tau) + 1]
	$$
\end{definition}

Note that, in Definition \ref{def:LCE transform}, we discard all digits $P_j[0]$, for  $j = 1,\dots, n/\tau$.
In order to being able to reconstruct the original string $S$ from its transform $\lambda_{q, \seed, \tau}(S)$, we need to make sure that this omission does not result in a loss of information. We will achieve this goal by choosing $q$ and $\seed$ in such a way that, with high probability, all the discarded digits are equal to $0$.

\subsection{Invertible Eq-Transforms}

In the rest of the paper we focus our attention on \emph{invertible} eq-transforms:

\begin{definition}
	We say that $\lambda_{q, \seed, \tau}(S)$ \emph{is invertible} if and only if we can reconstruct $S$ from $\lambda_{q, \seed, \tau}(S)$, $q$, $\seed$, and $\tau$.
\end{definition}

Recall the definition $P_j = \rks(S[0,j\cdot \tau-1])$ used to define our transform. It is easy to see the following:

\begin{lemma}\label{lem:invertible condition}
	If $P_j <\sigma^\tau$ for each $j=1, \dots, n/\tau$, then
	$\lambda_{q, \seed, \tau}(S)$ is invertible.
\end{lemma}
\begin{proof}
	To extract the $j$-th block $S[(j-1)\cdot \tau,j\cdot \tau-1]$, we first extract $P_j$ and $P_{j-1}$ from $\lambda_{q, \seed, \tau}(S)$ (if $j=1$, take $P_{j-1} = \seed$): since $P_j,P_{j-1} < \sigma^\tau$, then those two values are equal to the $j$-th and $(j-1)$-th blocks of the transform. Then, it is not hard to see that the following equality holds:
	$$
	S[(j-1)\cdot \tau,j\cdot \tau-1] = (P_{j}-\seed) - (P_{j-1}-\seed)\cdot \sigma^\tau \modd q.
	$$
	This implies that we can access any character of $S$ from its transform, that is, the transform can be inverted.
\end{proof}

The goal of the next lemma is to show that --- for an appropriate choice of the parameters $\tau, q, \seed$ and with inverse polynomial failure probability --- we can find an invertible eq-transform for any input string. In the next subsections we will refine this result in order to obtain a \emph{Monte Carlo} transform and a Las Vegas algorithm to derandomize the corresponding data structure.
Let $\tau > 0$ be an integer block size, $h \geq 2$, and assume $n\geq 2$. We define the set of integers $\mathcal Z_{\tau,h}$ as follows.

\begin{equation}\label{eq:Z}
\mathcal Z_{\tau,h} = \left[ \sigma^{\tau}, \left\lfloor \sigma^{\tau}\left( \frac{n^{h-1}}{n^{h-1}-1} \right)\right\rfloor \right].
\end{equation}


\begin{lemma}\label{lem:invertible}
	Given an input string $S\in \Sigma^n$, let:
	\begin{itemize}
		\item $h > 0$ be an arbitrarily large constant,
		\item $\tau > 0$ be any integer block size,
		\item $q$ be any integer chosen from $\mathcal Z_{\tau,h+2}$, and
		\item $\seed$ be a uniform integer chosen from $[0,q-1]$.
	\end{itemize}
	Then, $\lambda_{q, \seed, \tau}(S)$ is an invertible eq-transform with probability at least $1-n^{-h}$.
\end{lemma}	
\begin{proof}
	
	Let $P_i = \rks(S[0,i\cdot \tau-1])$, for $i=1, \dots, n/\tau$ be the shifted Karp-Rabin fingerprint of the prefix ending at the $i$-th block of $S$.	
	Let $x$ be any integer in $[0,q-1]$. Note that $P_i$ is a uniform random variable in the interval $[0,q-1]$. Let $0\leq x < q$:
	\begin{equation}\label{eq:unif}
	\begin{array}{lll}
	\mathcal P(P_i=x) &=& \mathcal P(\rk(S[0,i\cdot \tau-1]) + \seed \equiv_q x) \\
	&=&  \mathcal P(\seed \equiv_q x - \rk(S[0,i\cdot \tau-1])) \\
	&=& 1/q 
	\end{array}
	\end{equation}
	
	Let $h'=h+2$ and recall that $q$ is extracted from $\mathcal Z_{\tau,h'}$. Equation \ref{eq:unif} and the bound $q\leq \sigma^\tau\cdot \frac{n^{h'-1}}{n^{h'-1}-1}$ of Equation \ref{eq:Z} imply that:
	$$
	\mathcal P(P_i \geq \sigma^\tau) = 1-\sigma^\tau/q \leq 1/n^{h'-1} = 1/n^{h+1}
	$$
	
	Note that $\mathcal P(P_i \geq \sigma^\tau)$ is the probability that $P_i$ has more than $\tau$ digits. Let $\mathcal X$ be the random variable denoting the number of blocks $P_i$ having more than $\tau$ digits. The expected value of $\mathcal X$ is
	
	\begin{equation}\label{eq:exp X}
	E[\mathcal X] = \frac{n}{\tau}\cdot  \mathcal P(P_i \geq \sigma^\tau) \leq \frac{n}{\tau}\cdot \frac{1}{n^{h+1}} \leq  n^{-h}
	\end{equation}
	
	Note that $E[\mathcal X] = \sum_{k > 0} k\cdot \mathcal P(\mathcal X=k)$ and $\mathcal P(\mathcal X> 0) = \sum_{k > 0} \mathcal P(\mathcal X=k)$, so $\mathcal P(\mathcal X> 0) \leq E[\mathcal X]$ holds. From Equation \ref{eq:exp X} we obtain $\mathcal P(\mathcal X = 0) \geq 1-n^{-h}$. This is the probability that $P_i < \sigma^\tau$ for all $i=1,\dots, n/\tau$. Applying Lemma \ref{lem:invertible condition}, we obtain our claim.
\end{proof}

Crucially, note that Lemma \ref{lem:invertible} holds for \emph{any} value of $q$ chosen in advance (that is, before choosing $\seed$). 
In the following section we show that, by choosing a uniform prime $q$, with high probability the resulting eq-transform does not generate collisions between the fingerprints of substrings of $S$ and can thus be used to answer substring equality queries correctly.

\subsection{Monte Carlo Eq-Transforms}

In this subsection we formally define Monte Carlo eq-transforms and show how to efficiently compute them. The main idea is to answer substring equalities by comparing fingerprints. One issue with using function $\rks$ for this purpose, however, is that derandomizing it (that is, ensuring that it does not generate collisions among substrings of $S$) is computationally expensive. A common workaround is to introduce an auxiliary hash function that calls $\rks$ only on substrings of $S$ whose lengths are powers of two:

\begin{definition}\label{def: rkspow}
	Let $X\in \Sigma^m$, and let $\ell = 2^{\lfloor \log m\rfloor}$ be the largest power of two smaller than or equal to $m$. The hash function $\rkspow$ is defined as
	$$
	\rkspow(X) = \langle\  \rks(X[0,\ell-1]), \rks(X[m-\ell,m-1])\  \rangle
	$$	
\end{definition}

\begin{definition}
	We say that a hash function $\varphi$ is \emph{collision-free} on $S$ if and only if $S[i,i+\ell] \neq S[i',i'+\ell]$ implies $\varphi(S[i,i+\ell]) \neq \varphi(S[i',i'+\ell])$ for any $0\leq i,i' < n-\ell$.
\end{definition}

At this point, we can formally define our class of Monte Carlo eq-transforms:

\begin{definition}
	We say that $\lambda_{q, \seed, \tau}(S)$ is \emph{Monte Carlo} if and only if $\lambda_{q, \seed, \tau}(S)$ is invertible and $\rkspow$ is collision-free on $S$ with high probability.
\end{definition}

Next, we compute the probability that $\rkspow$ is collision-free given that $q$ is a uniform prime in $\mathcal Z_{\tau,h}$.
We start by proving the following number-theoretic lemma.

\begin{lemma}\label{lemma:primes_in_Z}
	Let $z_{\tau,h}$ denote the number of primes in $\mathcal Z_{\tau,h}$, for $h\geq 2$. If $\tau \geq \frac{12}{5}(h-1)\log_\sigma n$ then $z_{\tau,h} \in \Omega\left( \sigma^{\tau - (h+1)\log_\sigma n} \right)$.
\end{lemma}
\begin{proof}
	The size of $\mathcal Z_{\tau,h}$ is
	
	\begin{equation}\label{eq: Z lower bound}
	|\mathcal Z_{\tau,h}| = \sigma^{\tau}\left( \frac{n^{h-1}}{n^{h-1}-1} \right) - \sigma^{\tau} +1 \geq \sigma^{\tau-(h-1)\log_\sigma n}
	\end{equation}
	
	Let $\pi(x)$ denote the number of primes smaller than $x$. Let moreover $A = \sigma^{\tau}$ and $H = \sigma^{\tau-(h-1)\log_\sigma n}$ be the smallest element contained in $\mathcal Z_{\tau,h}$ and the lower bound for $|\mathcal Z_{\tau,h}|$ stated in Equation \ref{eq: Z lower bound}, respectively. Our aim is to compute an asymptotic lower bound for the number $z_{\tau,h} \geq \pi(A+H) - \pi(A)$ of primes contained in $\mathcal Z_{\tau,h}$. 
	The Prime Number Theorem can be applied to solve this task only if $H \in \Omega(A)$, so we cannot use it in our case. Luckily for us, Heath-Brown~\cite{heath1978differences} proved (see also~\cite{maier1985primes}) that, if $H$ grows at least as quickly as $A^{7/12}$, then: 
	$$
	\lim_{A\rightarrow \infty} \frac{\pi(A+H)-\pi(A)}{H/\ln A} = 1
	$$ 
	Solving $H \geq A^{7/12}$ we get the constraint
	$$
	\tau \geq \frac{12}{5}(h-1)\log_\sigma n 
	$$
	Since $n\rightarrow \infty$ implies that $A\rightarrow \infty$, 
	if the above inequality is satisfied then Heath-Brown's theorem yields the following asymptotic relation as a function of $n$:
	$$
	\pi(A+H)-\pi(A) = \Theta(H/\ln A)
	$$
	The function $H/\ln A$ can be bounded as follows: 
	$$
	H/\ln A = \frac{\sigma^{\tau - (h-1)\log_\sigma n}}{\tau\ln\sigma} \geq \sigma^{\tau - (h+1)\log_\sigma n}
	$$
	where we used the facts that $\ln\sigma < \log\sigma \leq  \log n^{\bigO(1)} \leq n$ and $\tau \leq n$ (which hold for $n$ larger than some constant).
	We finally obtain:
	$$
	z_{\tau,h} \in \Omega\left( \sigma^{\tau - (h+1)\log_\sigma n} \right)
	$$	
	Where we used the $\Omega$ notation (instead of $\Theta$) since Equation \ref{eq: Z lower bound} only implies a lower bound for $z_{\tau,h}$. This will be enough for our purposes.
\end{proof}

Using Lemma \ref{lemma:primes_in_Z}, we can bound the probability of obtaining a collision-free function $\rkspow$:

\begin{lemma}\label{lem:collision-free}
	Given an input string $S\in\Sigma^n$, let:
	\begin{itemize}
		\item $h'\geq 8/3$ be an arbitrarily large constant,
		\item $\tau \geq (3h'-4)\cdot \log_\sigma n$, and
		\item $q$ be a uniform prime number chosen from $\mathcal Z_{\tau,h'}$.
	\end{itemize}
	Then, for any $\seed$ the function $\rkspow$ is collision-free on $S$ with probability at least $(1-n^{-h'})$.
\end{lemma}
\begin{proof}
	First, note that if $\rks$ is collision-free on $S$ then $\rkspow$ is collision-free on $S$ (this follows very easily from the definition of $\rkspow$). 
	
	Let $X \neq Y$, with $|X|=|Y|$, be two substrings of $S$. 
	By definition, $\rks(X) = \rk(X)+\seed\modd q$. As a consequence, $\rk(X) - \rk(Y) \equiv_q \rks(X) - \rks(Y)$. This proves that if $\rk$ is collision-free on $S$ then $\rks$ is collision-free on $S$ and thus, as observed above, $\rkspow$ is collision-free on $S$. 
	Crucially, this holds for any seed $\seed$.
	We therefore concentrate on computing the probability that $\rk$ is collision-free on $S$.
	
	Let $\mathcal C\geq 0$ be the random variable denoting the number of collisions between equal-length substrings of $S$, that is, the number of triples $(i,i',\ell)$ with $i,i'< n -\ell$ such that $S[i,i+\ell] \neq S[i',i'+\ell]$  and $\rk(S[i,i+\ell]) = \rk(S[i',i'+\ell])$.
	Our goal is to compute an upper bound for $\mathcal P(\mathcal C>0)$. 
	Let $X_i^k$ denote the substring of $S$ of length $k$ starting at position $i$. 
	Equivalently, $C$ is the number of substrings $X_i^k \neq X_{i'}^k$ such that $X_i^k \equiv_q X_{i'}^k$.
	
	There is at least one collision ($\mathcal C>0$) iff $X_i^k \equiv_q X_{i'}^k$ for at least one pair $X_i^k \neq X_{i'}^k$, i.e. iff $q$ divides at least one of the numbers $|X_i^k - X_{i'}^k|$ such that $X_i^k \neq X_{i'}^k$. Since $q$ is prime, this happens iff $q$ divides their product
	$z = \prod_{k=1}^{n-1} \prod_{i,i': X_i^k \neq X_{i'}^k} |X_i^k - X_{i'}^k|$.
	Since each $|X_i^k - X_{i'}^k|$ has at most $n$ digits in base $\sigma$ and there are no more than $n^2$ such pairs for every $k$, we have that $z$ has at most $n^4$ digits in base $\sigma$. Written in binary, $z$ has at most $n^4 \lceil\log\sigma\rceil \leq n^5$ digits in base $2$ (again, $\lceil \log\sigma\rceil \leq \log n^{\bigO(1)}+1\leq n$ holds for $n$ larger than some constant). It follows that there cannot be more than $n^5$ distinct primes dividing $z$.

	Note that our choice $\tau \geq (3h'-4)\cdot  \log_\sigma n$ satisfies the inequality  $\tau \geq \frac{12}{5}(h'-1)\log_\sigma n$ of Lemma \ref{lemma:primes_in_Z} for all $h'\geq 8/3$.
	The probability of uniformly picking a prime $q\in\mathcal Z_{\tau,h'}$ dividing $z$ is therefore upper bounded by $n^5/z_{\tau,h'}$, where $z_{\tau,h'} = \Omega\left( \sigma^{\tau - (h'+1)\log_\sigma n} \right)$ is number of primes contained in $\mathcal Z_{\tau,h'}$ computed in Lemma \ref{lemma:primes_in_Z}. A few calculations yield that $n^5/z_{\tau,h'} \in O(n^{-2h'})$, which for $n$ larger than some constant is at most $n^{-h'}$.
\end{proof}

At this point, a large enough block size $\tau$ satisfies both Lemmas \ref{lem:invertible} (invertible transform) and \ref{lem:collision-free} (collision-free hash), allowing us to obtain a Monte Carlo eq-transform:

\begin{lemma}\label{lem:Monte Carlo}
	Given an input string $S\in\Sigma^n$, let:
	\begin{itemize}
		\item $c\geq 8/3$ be an arbitrarily large constant,
		\item $\tau \geq (3c+5)\cdot  \log_\sigma n$,
		\item $q$ be a uniform prime number chosen from $\mathcal Z_{\tau,c+3}$, and
		\item $\seed$ be a uniform integer chosen from $[0,q-1]$.
	\end{itemize}
	Then, $\lambda_{q, \seed, \tau}(S)$ is a Monte Carlo eq-transform. In particular, its associated hash function $\rkspow$ is collision-free with probability at least $1-n^{-c}$.
\end{lemma}
\begin{proof}
	We use constant $h = c+1$ in Lemma \ref{lem:invertible} and $h'=c+3$ in Lemma \ref{lem:collision-free}. 
	Note that $q$ is chosen from $\mathcal Z_{\tau,c+3} = \mathcal Z_{\tau,h+2} = Z_{\tau,h'}$.
	The choice $c \geq 8/3$ implies $h' \geq 8/3$.
	Furthermore, $\tau \geq (3c+5)\cdot  \log_\sigma n = (3h'-4)\cdot \log_\sigma n$. All conditions of Lemma \ref{lem:collision-free} are thus satisfied. By Lemma \ref{lem:collision-free}, $\rkspow$ is collision-free with probability at least $1-n^{-h'}$. By Lemma \ref{lem:invertible}, $\lambda_{q, \seed, \tau}(S)$  is invertible with probability at least $1-n^{-h}$. By union bound the probability that both events hold true simultaneously is at least $1- n^{-h'} +  n^{-h} = 1- n^{-c-3} +  n^{-c-1} \geq 1- 2\cdot n^{-c-1} \geq 1- n^{-c}$ (we can safely assume $n\geq 2$). 
\end{proof}

In order to being able to build our 
data structure \emph{in-place}, one additional major detail needs to be figured out: we need to show that we can extract prime numbers from $\mathcal Z_{\tau,h}$ in place.

\begin{lemma}\label{lemma:in_place_primes}
	Fix two constant $h\geq 2$ and $p>0$, and assume that $\tau \geq \frac{12}{5}(h-1)\log_\sigma n$. Then, in $\bigO(\mathtt{polylog}(n))$ expected time and in $\bigO(1)$ words of space we can find an integer $q\in \mathcal Z_{\tau,h}$ that, with probability at least $1-n^{-p}$, is a prime uniformly distributed among the primes contained in $\mathcal Z_{\tau,h}$.
\end{lemma}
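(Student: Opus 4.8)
The plan is to obtain $q$ by \textbf{rejection sampling}: repeatedly draw a uniform candidate from $\mathcal Z_\tau$, subject it to a randomized primality test, and return the first candidate that passes. In detail, I would first compute $\sigma^{\tau-1}$ (by fast exponentiation) and $|\mathcal Z_\tau| = \lfloor \sigma^{\tau-1}/(n-1)\rfloor + 1$ (by long division); both are $\Theta(\log n)$-bit integers, hence fit in $\bigO(1)$ machine words and are obtained in $\bigO(\texttt{polylog}(n))$ time. Each round then draws a uniform $r\in[0,|\mathcal Z_\tau|-1]$ --- possible in $\bigO(1)$ time and space by our assumptions --- sets $q=\sigma^{\tau-1}+r\in\mathcal Z_\tau$, and runs Miller-Rabin on $q$ with $k=\Theta((c''+1)\log n)$ independent uniformly chosen bases; if every base reports ``probably prime'' we output $q$, otherwise we begin a new round. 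Every quantity manipulated inside a round (the modulus $q<\sigma^\tau$, the exponent $<q$, the running values $<q^2$ of the repeated-squaring exponentiation, all loop counters) has $\bigO(\log n)$ bits and so occupies $\bigO(1)$ words, and by our assumptions on $(c\cdot w)$-bit arithmetic each round costs $\bigO(\texttt{polylog}(n))$ time in $\bigO(1)$ words of space.

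For the expected running time I need that primes are \emph{dense} in $\mathcal Z_\tau$. Here the polynomial-loss bound $z_p\geq \sigma^{\tau-4\log_\sigma n}$ of Lemma~\ref{lemma:primes_in_Z} is too weak (for $\sigma=\bigO(1)$ it would only bound the prime density by $\Omega(n^{-3})$, giving a polynomial number of rounds); instead I would invoke the sharper estimate underlying it. Since $\tau\geq\frac{12}{5}\log_\sigma n+1$, the hypothesis of Heath-Brown's theorem holds for $A=\sigma^{\tau-1}$ and $H=|\mathcal Z_\tau|-1=\Theta(\sigma^{\tau-1}/n)$, yielding $z_p=\Theta(H/\ln A)=\Theta\!\big(\sigma^{\tau-1}/(n\log n)\big)$, where I used $\ln A=(\tau-1)\ln\sigma=\Theta(\log n)$ because $\tau=\Theta(\log_\sigma n)$. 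Hence a uniformly drawn candidate is prime with probability $\rho = z_p/|\mathcal Z_\tau| = \Theta(1/\log n)$; since a prime always passes Miller-Rabin, the algorithm halts after $1/\rho=\bigO(\log n)$ rounds in expectation, and multiplying by the $\bigO(\texttt{polylog}(n))$ per-round cost gives the claimed expected time. (If a worst-case bound is desired, one can cap the number of rounds at $\Theta((c''+1)\log^2 n)$; by the density just established the cap is exceeded with probability $\leq n^{-c''}$, which is absorbed into the failure probability below.)

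Finally, the output distribution. Condition on the event $E$ that Miller-Rabin never errs (never calls a composite ``probably prime'') during the run: on $E$ the procedure is exactly rejection sampling from the uniform distribution on $\mathcal Z_\tau$ accepting only primes, so its output is a uniformly random prime of $\mathcal Z_\tau$ (the first prime in an i.i.d.\ uniform stream is uniform over the primes). It remains to bound the probability of the complement of $E$. Viewing the candidates as an i.i.d.\ stream $q_1,q_2,\dots$ and letting $T$ be the index of the first prime (so $T$ is geometric with $E[T]=1/\rho=\bigO(\log n)$), the output fails to be a uniform prime only if Miller-Rabin errs on some composite $q_j$ with $j<T$; given $q_1,\dots,q_j$ composite, the $j$-th test errs with probability $\leq 4^{-k}$ independently of earlier rounds, so this event has probability at most $(1-\rho)^{j}\,4^{-k}$ for each $j$. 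Summing over $j$ gives a failure probability at most $(1/\rho)\,4^{-k}=\bigO(\log n)\cdot 4^{-k}$, which is $\leq n^{-c''}$ for $k=\Theta((c''+1)\log n)$; this establishes the claimed guarantee. The two points that require care are (i) using the sharper Heath-Brown estimate rather than the weaker bound recorded in Lemma~\ref{lemma:primes_in_Z}, and (ii) taking the union bound over the \emph{random} number of rejection rounds before the first prime; everything else is routine.
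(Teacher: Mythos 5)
Your proposal is correct and follows essentially the same route as the paper: rejection sampling from $\mathcal Z_\tau$ with an in-place Miller--Rabin test, prime density $\Theta(1/\log n)$ obtained from the Heath--Brown estimate (which the paper likewise draws from the proof of Lemma~\ref{lemma:primes_in_Z} rather than from its weaker stated bound), and $\Theta(\log n)$ repetitions of the test to push the failure probability below $n^{-c''}$. Your explicit union bound over the random number of rejection rounds is a minor refinement of bookkeeping that the paper's version glosses over.
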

\begin{proof}
	The overall strategy consists in picking a uniform integer from $\mathcal Z_{\tau,h}$ and testing it for primality with a randomized test. If the integer is (probably) prime then we return it, otherwise we repeat the procedure for a predefined maximum amount of trials to guarantee high probability of success.
	If the procedure succeeds, then clearly the result is uniformly distributed among the primes in $\mathcal Z_{\tau,h}$ since every prime in $\mathcal Z_{\tau,h}$ has the same probability to be chosen.

	To test primality we use an in-place version of the Miller-Rabin probabilistic test~\cite{rabin1980probabilistic}. 
	Let $GCD(x, y)$ be the greatest common divisor of $x$ and $y$.
	Given a candidate number $m$ and a positive integer $b < m$, the test verifies the condition $W_m(b)$ defined as
	\begin{enumerate}
		\item $b^{m-1} \not\equiv_m 1$, or
		\item $\exists i$ s.t. $2^i|(m-1)$ and $1<GCD(b^{(m-1)/2^i}-1,m)<m$
	\end{enumerate}   
	If $W_m(b)$ holds true, then $m$ is composite and $b$ is said to be a \emph{witness} of the compositeness of $m$. The key property that makes the test probabilistic is that, if $m$ is composite, then at least $3/4$ of the numbers $1\leq b < m$ are witnesses, i.e. are such that $W_m(b)$ holds true~\cite{rabin1980probabilistic}. It follows that, after $k$ repetitions of the test $W_m(b_1), \dots, W_m(b_k)$ using uniform $b_1, \dots, b_k$, if $m$ is composite then we wrongly identify it as being prime with probability at most $2^{-2k}$. 
	Fix any constant $p>0$.
	We choose $k = (c'\cdot \log_2 n)/2$ for $c' = p+1$. Then, the test is repeated $k\in\bigO(\log n)$ times and the probability of returning a composite number is at most $2^{-2k} = 2^{-c'\cdot \log_2 n} = n^{-c'} = n^{-p-1}$. 
	
	It is not too hard to see that $W_m(b)$ can be tested in $\bigO(\texttt{polylog}(n))$ time while using $\bigO(1)$ words of working space. First, note that the numbers $b< m\in \mathcal Z_{\tau,h}$ for which we test $W_m(b)$ satisfy $b< m\leq n^{\bigO(1)}$, i.e. they can be written using $\bigO(\log n)$ bits. Then, computing $b^y\modd m$, with $y\leq n^{\bigO(1)}$, can be done in-place using the following recurrence given by the fast exponentiation algorithm. Let $|y|$ be the number of bits of $y$ and $y_j$, $1\leq j\leq |y|$, be the $j$-th leftmost bit of $y$. Then, $b^y\modd m = t_{|y|}$, where
	\begin{itemize}
		\item $t_1 = b^{y_1}$
		\item $t_{i+1} = (t_{i}^2\cdot b^{y_i}) \modd m$
	\end{itemize}
	It follows that any $b^y\modd m$, with $y\leq n^{\bigO(1)}$,  can be computed in $\bigO(\log n)$ time using $\bigO(1)$ words of working space. To compute the greatest common divisor in step (2) of the test, note that $GCD(b^{(m-1)/2^i}-1,m) = GCD(b^{(m-1)/2^i}-1\modd m,m)$. This quantity can be computed in $\bigO(\log n)$ time and in $\bigO(1)$ words of working space using the Euclidean algorithm. The GCD has to be computed for at most $\bigO(\log n)$ values of $i$ (i.e. such that $2^i$ divides $m-1$). 
	To sum up, our adaptation of the Miller-Rabin primality test runs in $O(\mathtt{polylog}(n))$ worst-case time and fails in identifying a composite number with probability at most $n^{-p-1}$.
	
	It remains to compute how many candidates $x_1, \dots, x_t \in \mathcal Z_{\tau,h}$ we need to test for primality before giving up and returning a wrong result. In order for our algorithm to run in worst-case time, we need to bound $t$ by some function of $n$. We choose to test $t = c''\log^3 n$ candidates, for  $c'' = p+1$ (we could test less candidates, but for us a polylogarithmic bound will suffice). Lemma \ref{lemma:primes_in_Z} states that the primes density in  $\mathcal Z_{\tau,h}$ is $\Theta((\tau\log\sigma)^{-1}) = \Theta(1/\log n)$. 
	This implies that a group of $\log^2 n$ uniform numbers from $\mathcal Z_{\tau,h}$ contains $\Omega(\log n)$ primes with probability at least $1/2$; the quantity $\Omega(\log n)$ is at least 1 for $n$ larger than some constant. As a result, the probability that  $t = c''\log^3 n$ uniform numbers do not contain any prime is at most $n^{-c''} = n^{-p-1}$. 
	
	Our prime generator works as follows. We test the $t$ candidates for primality, generating them one by one. As soon as the test succeeds (i.e. it finds a potential prime), we return the corresponding candidate and do not proceed with the others. If the test discovers that all $t$ numbers are composite, then we return the last number that we tested (note that this number is composite). 
	Now, let $q\in \mathcal Z_{\tau,h}$ be the number returned by our prime generator. $q$ is composite if (1) the list of tested candidates does not contain any prime, or (2) the primality test failed on a composite number. By union bound, the probability of the union of these two events is upper-bounded by $n^{-c''} + n^{-c'} = 2n^{-p-1} \leq n^{-p}$ (we can safely assume $n\geq 2$).
\end{proof}

In the next theorem we present our data structure. Exhibiting the parameters $q$ and $\seed$ of the hash function $\rkspow$ used internally by the structure will allow us to derandomize it in the next section.

\begin{theorem}\label{th: Monte Carlo structure}
	Let $S\in \Sigma^n$, with $|\Sigma| \leq n^\alpha$ for some constant $\alpha$, be a string representation supporting the extraction and replacement of blocks of $\Theta(\log_\sigma n)$ contiguous characters of $S$ in  time $\packedtime$. Let $d\geq 5/3$ be an arbitrarily large constant. 
	Then, in $O(\packedtime\cdot n/\log_\sigma n)$ worst-case time and using $O(1)$ words of working space on top of $S$, the string $S$ can be replaced with a data structure supporting the following queries: 
	\begin{itemize}
		\item extraction of any block of $O(\log_\sigma n)$ contiguous packed characters of $S$ in $O(\packedtime + \log\log_\sigma n)$ worst-case time and $O(\packedtime)$ memory probes, and 
		\item substring equality queries $S[i,i+\ell] \stackrel{?}{=} S[i',i'+\ell]$ in $O(\packedtime + \log \ell)$ worst-case time and $O(\packedtime)$ memory probes. 
	\end{itemize}
	Using $O(\log n)$ additional precomputed word constants (independent from $S$), the cost of all queries can be improved to $O(\packedtime)$ worst-case time and $O(\packedtime)$ memory probes.
	
	Random access queries are always correct. Substring equality queries are correct with probability at least $1-n^{-d}$.
	Correctness of the data structure is equivalent to the fact that the hash function $\rkspow$, whose parameters $q$ and $\seed$ are accessible from the structure, is collision-free on $S$. This hash function returns fingerprints composed of $3d+2\alpha+10 \in O(1)$ words of $\lceil \log n\rceil$ bits each.
	Finally, the structure can be inverted in-place, restoring $S$, in $O(\packedtime\cdot n/\log_\sigma n)$ worst-case time. 
\end{theorem}
\begin{proof}
	
	Choose any constant $d \geq 5/3$ and fix the block size to $\tau = \lceil (3d+8) \cdot \log_\sigma n \rceil$.
	
	We apply Lemmas \ref{lem:Monte Carlo} and \ref{lemma:in_place_primes} with constants $c,p=d+1 \geq 8/3$ and $h = c+3 > 8/3$. 
	The prime $q$ will be chosen from $\mathcal Z_{\tau,d+4} = \mathcal Z_{\tau,c+3} = \mathcal Z_{\tau,h}$, as required by the two Lemmas.
	The block size $\tau = \lceil (3d+8)\cdot \log_\sigma n \rceil \geq (3c+5)\cdot \log_\sigma n$ satisfies the condition of Lemma \ref{lem:Monte Carlo}. 
	Moreover, $h > 8/3$ implies $h>2$ and $\tau \geq (3c+5)\cdot \log_\sigma n = (3h-4)\cdot \log_\sigma n  \geq \frac{12}{5}(h-1)\log_\sigma n$, thus also the conditions of Lemma \ref{lemma:in_place_primes} are satisfied.

	By Lemma \ref{lemma:in_place_primes}, in $O(\mathtt{polylog}(n)) \subseteq O(n/\log n)$ time we extract a number $q \in \mathcal Z_{\tau,h} = \mathcal Z_{\tau,c+3}$ that with probability at least $1-n^{-p} = 1-n^{-d-1}$ is a uniform prime. Then, we extract a uniform seed $\seed \in [0,q-1]$. If $q$ is prime, by Lemma \ref{lem:Monte Carlo} $\lambda_{q, \seed, \tau}(S)$ is a Monte Carlo eq-transform with probability at least $1-n^{-c} = 1-n^{-d-1}$. 
	Let $q \sim_{U} \mathcal Z_{\tau,h}$ denote the event that $q$ is a uniform prime from $\mathcal Z_{\tau,h}$.
	The probability of obtaining a correct transform is therefore (assume $n\geq 2$):
	$$
	\mathcal P(q \sim_{U} \mathcal Z_{\tau,h}) \cdot \mathcal P(\lambda_{q, \seed, \tau}(S)\ is\ Monte\ Carlo\ |\ q \sim_{U} \mathcal Z_{\tau,h}) \geq (1-n^{-d-1})^2 \geq 1 -2n^{-d-1} \geq  1 -n^{-d}.
	$$
	
	Next, we compute the number $W$ of words of $\lceil \log n\rceil$ bits needed to represent a fingerprint of $\rks$. 
	This number satisfies $W \leq \left\lceil (1+\log F)/\lceil \log n\rceil \right\rceil \leq \log F / \log n + 2$.
	By definition of our prime $q$, each fingerprint can be expressed using at most $\tau+1$ digits in base $\sigma$, i.e. the largest fingerprint $F$ satisfies $F<\sigma^{\tau+1}$. From the definition of the block size $\tau$, we have $\tau+1 \leq (3d+8)\log_\sigma n + 2$. 
	It follows that $\log F < \log \sigma ^{\tau+1} \leq \log \sigma^{(3d+8)\log_\sigma n + 2} = (3d+8)\log n + 2\log \sigma \leq (3d+8+2\alpha)\log n$. To conclude, $W \leq \log F/\log n + 2 \leq 3d + 2\alpha +10$.

	With one scan of $S$ we compute $P_j = \rks(S[0,j\cdot \tau-1])$ for $j = 1,\dots, n/\tau$. At each step $j$ we keep only $P_j$ in memory and discard the previous values. 
	It is easy to see that $P_j$ can be computed from $P_{j-1}$ (with $P_0=\seed$) and $S[(j-1)\cdot \tau, j\cdot \tau-1]$ in constant time, therefore this step can be performed in-place and $O(\packedtime\cdot n/\log_\sigma n)$ worst-case time. 
	If $P_j \geq \sigma^\tau$ for any $1 \leq j \leq n/\tau$, then we do not replace $S$ with its transform and we answer \texttt{false} to all substring equality queries. Otherwise, we replace every block of $S$ with the $\tau$ least significant digits of the corresponding $P_j$, therefore replacing $S$ with $\lambda_{q, \seed, \tau}(S)$. We store also $q$, $\seed$, and $\tau$ on the side in $O(1)$ words. It is easy to see that the transform can be inverted, replacing it with $S$, in $O(\packedtime\cdot n/\log_\sigma n)$ worst-case time and constant working space (with the same strategy used to answer random access, read below). 
	
	Random access boils down to computing the $\tau$-digits integer $X_j = S[(j-1)\cdot \tau,j\cdot \tau-1]$, i.e. the $\tau$ packed characters of the $j$-th block, for any $1 \leq j \leq n/\tau$. As observed earlier, this can be achieved by means of the equality $X_j = (P_{j}-\seed) - (P_{j-1}-\seed)\cdot \sigma^\tau \modd q$ (with $P_0=\seed$). Provided that we pre-compute and store $\sigma^\tau$ in $O(1)$ extra words of space, this calculation takes constant time. Running time is therefore dominated by the cost $O(\packedtime)$ of extracting $P_j$ and $P_{j-1}$ from the transform.
	Once the packed characters $X_j$ of the $j$-th block have been computed, any packed substring of those characters can be retrieved as follows. Extracting the suffix of length $d$ of the block is achieved as $X_j \modd \sigma^{d}$. Any packed substring can then be obtained with a modulo operation (to cut a prefix of the block) followed by an integer division (to shift the result):
	
	$$
	X_j[i,i+d-1] = (X_j \modd \sigma^{\tau-i}) / \sigma^{\tau-i-d}
	$$
	
	The two powers of $\sigma$ have exponent at most $\tau$. They can thus be retrieved in constant time using the $\tau \in O(\log n)$ precomputed word constants $\sigma^1, \dots, \sigma^\tau$ or computed in $O(\log\tau) = O(\log\log_\sigma n)$ time with the fast exponentiation algorithm at no additional space usage.
	
	Solving substring equality queries boils down to computing $\rkspow(X')$ for any substring $X'$ of $S$ of length $\ell = |X'|$. In turn, this requires being able to compute $\rk(X)$ for substrings $X$ of $S$ such that $|X| \leq |X'| = \ell$ is a power of two.
	Using random access, we can extract any block of $\tau$ consecutive digits of $S$. 
	From the definition of $\lambda_{q, \seed, \tau}(S)$, we can moreover retrieve the fingerprint $\rks$ of any block-aligned prefix of $S$ by accessing $\bigO(\log_\sigma n)$ digits of $\lambda_{q, \seed, \tau}(S)$. These operations are sufficient to compute $\rk$ for any prefix $P$ of $S$. We break $P$ as $P'p$, where $P'$ is a sequence of full blocks and $p$ is the prefix of a block. Then: 
	$$
	\rk(P) = (\rks(P')-\seed)\cdot \sigma^{|p|} + p \modd q
	$$
	The value $\rks(P')$ is explicitly stored in the transform. 
	Computing the value $\sigma^{|p|}$ with the fast exponentiation algorithm takes no memory accesses and $\bigO(\log|p|) \subseteq \bigO(\log \ell)$ time. 
	Alternatively, we can pre-compute and store the $O(\tau) \subseteq O(\log n)$ word constants $\sigma^{1}, \dots, \sigma^{\tau}$. Then, $\sigma^{|p|}$ can be retrieved in $\bigO(1)$ time and one memory access.
	If we store these $O(\log n)$ precomputed word constants, then 
	value $p$ can be computed in $O(\packedtime)$ time as shown above (random access). Otherwise, since $p$ is the prefix of some block $B$ retrieved in $O(\packedtime)$ time, we can compute it in $O(\log \ell)$ time with an integer division $p = B/\sigma^{\tau-|p|}$, where $\sigma^{\tau-|p|} = \sigma^\tau / \sigma^{|p|}$ and $\sigma^{|p|}$ was computed earlier in $O(\log \ell)$ time.
	At this point, the fingerprint $\rk(X)$ of any substring $X$ of $S$ whose length $|X|$ is a power of two can be computed as usually done with Karp-Rabin fingerprints, i.e. as a difference (modulo $q$) between the fingerprint of the $S$-prefix ending at the end of $X$ and $\sigma^{|X|}$ times the fingerprint of the $S$-prefix starting before $X$:
	$$
	\rk(S[i,j]) = \rk(S[0,j]) - \rk(S[0,i-1])\cdot \sigma^{j-i+1}\modd q
	$$

	One option for computing $\sigma^{j-i+1}$ is by fast exponentiation in  $O(\log \ell)$ time. 
	Otherwise, 
	note that $j-i+1$ is a power of two, thus it can be expressed as $j-i+1 = 2^e$ for some $0\leq e \leq \log n$. 
	Another option is thus to explicitly store the $O(\log n)$ word constants $\sigma^{2^e}$, for $e=0, \dots, \lfloor \log n \rfloor$ and retrieve $\sigma^{2^e}\modd q$ in constant time with one memory access. Again, note that these words are constants that do not depend on $S$. Taking into account the time $O(\packedtime)$ needed for extracting the required fingerprints of prefixes of $S$, we obtain our claim. 
\end{proof}

\subsection{A Las Vegas Construction Algorithm}

In this section we show how to derandomize the Monte Carlo structure of Theorem \ref{th: Monte Carlo structure} via a Las Vegas construction algorithm: the resulting structure is constructed in worst-case time w.h.p. and is always correct.

\begin{theorem}\label{th: Las Vegas structure}
	Let $S\in \Sigma^n$, with $|\Sigma| \leq n^\alpha$ for some constant $\alpha$, be a string representation supporting the extraction and replacement of blocks of $\Theta(\log_\sigma n)$ contiguous characters of $S$ in  time $\packedtime$. Let $d\geq 5/3$ be an arbitrarily large constant. 
	Then, for any constant $0 < \epsilon \leq 1$ and any parameter $K \in \Omega(n^\epsilon)$, in $O\left(\packedtime\cdot(n^2\log n/K + n/\log_\sigma n )\right)$ worst-case time with probability at least $1-n^{-d}$ and using $K$ rewritable integers in the range $[0,n-1]$ as temporary working space on top of $S$, the string $S$ can be replaced with a data structure supporting the following queries: 
	\begin{itemize}
		\item extraction of any block of $O(\log_\sigma n)$ contiguous packed characters of $S$ in $O(\packedtime + \log\log_\sigma n)$ worst-case time and $O(\packedtime)$ memory probes, and 
		\item substring equality queries $S[i,i+\ell] \stackrel{?}{=} S[i',i'+\ell]$ in $O(\packedtime + \log \ell)$ worst-case time and $O(\packedtime)$ memory probes. 
	\end{itemize}
	Using $O(\log n)$ additional precomputed word constants (independent from $S$), the cost of all queries can be improved to $O(\packedtime)$ worst-case time and $O(\packedtime)$ memory probes.
	
	The data structure itself uses $O(1)$ words on top of the space previously occupied by $S$ (and the $O(\log n)$ word constants, if they are used), always answers correctly to the queries, and it can be inverted in-place, restoring $S$, in $O(\packedtime\cdot n/\log_\sigma n)$ worst-case time. 
\end{theorem}
\begin{proof}
	The idea is to build the structure of Theorem \ref{th: Monte Carlo structure}, obtaining the three parameters $\tau$, $q$, and $\seed$, and checking if $\rkspow$ is collision-free on $S$. If this is not the case, then we invert the structure and re-build it using again  Theorem \ref{th: Monte Carlo structure}. Since this happens with low probability, our final running time will hold in the worst case with high probability.  
	
	To check that $\rkspow$ is collision-free on $S$, we need to ensure that $\rks$ is collision-free among substrings of $S$ whose lengths are powers of two (Definition \ref{def: rkspow}), i.e. for all $S[i,j] \neq S[i',j']$ such that $j-i+1 = j'-i'+1 = 2^e$ for some $e\geq 0$, also $\rks(S[i,j]) \neq \rks(S[i',j'])$ holds. The overall idea is solve the problem iteratively for exponents $e=0,1,\dots, \lfloor\log n \rfloor$: given that the property holds for exponent $e$, we verify that it holds for exponent $e+1$. 
	
	We show how to check collisions between substrings of length $2^e$.
	We conceptually divide the string positions $0,1,\dots, n-1$ into $n/K'$ blocks of size $K' = \frac{K}{2(2\lceil 3d + 2\alpha +10 \rceil + 1)}$: the $i$-th block $B_i$, with $0\leq i < n/K'$, is $B_i = \{i\cdot K', i\cdot K'+1, \dots, i\cdot K' + K' -1\}$ (assume for simplicity that $K'$ divides $n$: if this is not the case, then the last block has less than $K'$ elements; it will be easy to see that this will not affect our bounds).     
	For every pair of distinct blocks $B_i \neq B_{i'}$, we do the following. 
	We build a temporary  sequence $V_{i,i'}$ of length $2K'$ containing all tuples $(\rks(S[j,j+2^e-1]),j)$ such that $j \in B_i \cup B_{i'}$. 
	Each fingerprint $\rks(S[j,j+2^e-1])$ takes exactly $\lceil 3d + 2\alpha +10 \rceil$ words of $\lceil \log n\rceil$ bits each to be stored (see Theorem \ref{th: Monte Carlo structure}). Each of this words can be stored in two integers from the range $[0,n-1]$ (one integer and one bit would actually be enough, but for simplicity we use this upper bound). 
	We thus represent a fingerprint as a tuple of $2\lceil 3d + 2\alpha +10 \rceil$ integers in the range $[0,n-1]$.
	Each index $j$ is moreover an integer in the range $[0,n-1]$. 
	It follows that the sequence $V_{i,i'}$ is composed of tuples totaling $2K' \cdot (2\lceil 3d + 2\alpha +10 \rceil + 1) = K$ integers in the range $[0,n-1]$; this is our temporary working space. 
	Our data structure allows computing $V_{i,i'}$ in $O(K\cdot \packedtime)$ time. 
	At this point, we use the result of Franceschini et al.~\cite{radixsort} to radix-sort in-place the tuples contained in $V_{i,i'}$ (that is, tuples are sorted lexicographically). Since we assume $K \in \Omega(n^\epsilon)$ for some constant $\epsilon > 0$ and since every tuple contained in $V_{i,i'}$ is composed of $O(1)$ integers in the range $[0,n-1]$, in-place radix sort runs in $O(K)$ time and $O(1)$ words of working space. Now, for every pair of adjacent tuples $V_{i,i'}[t] = (j,\rks(S[j,j+2^e-1]))$ and  $V_{i,i'}[t+1] = (j',\rks(S[j',j'+2^e-1]))$ such that $\rks(S[j,j+2^e-1]) = \rks(S[j',j'+2^e-1])$ holds, we check if this is a collision, i.e. if $S[j,j+2^e-1] \neq S[j',j'+2^e-1]$. 
	Note that, without loss of generality, we can assume that $O(\packedtime)$-time queries are available on our structure: it is sufficient to adjust by an arbitrarily small multiplicative constant the parameter $K'$ in order to free $O(\log n)$ words of space (to fit the $O(\log n)$ word constants of Theorem \ref{th: Monte Carlo structure}) while overall still using at most $K$ temporary integers of working space.
	If $e=0$, then the two strings to be compared have length $1$ and the check can be performed in $O(\packedtime)$ time with two random access queries via our data structure. Otherwise ($e>0$), we exploit the fact that we have already checked collisions between all substrings of $S$ of length $2^{e-1}$: $S[j,j+2^e-1] = S[j',j'+2^e-1]$ if and only if their respective two halves match under the hash function $\rks$, i.e. $\rks(S[j,j+2^{e-1}-1]) = \rks(S[j',j'+2^{e-1}-1])$ and $\rks(S[j+2^{e-1},j+2^e-1]) = \rks(S[j'+2^{e-1},j'+2^e-1])$. 
	Using our data structure, these comparisons take  $O(\packedtime)$ time. On the sequence $V_{i,i'}$ we perform at most $2K'-1 \in O(K)$ such comparisons.
	Performing these checks on all pairs of blocks $B_i \neq B_{i'}$ is equivalent to checking all potential collisions among all substrings of $S$ of length $2^e$.
	Summing up, for each $e=0, \dots, \lfloor \log n\rfloor$ we check $O((n/K)^2)$ block pairs, and each verification takes $O(\packedtime\cdot K)$ time. The total running time of the complete verification procedure is thus $O(\packedtime\cdot n^2\log n / K)$, and the total working space is at most $K$ integers in the range $[0,n-1]$. Considering the time needed for building and inverting our structure (Theorem \ref{th: Monte Carlo structure}), we obtain our claim. 
\end{proof}

\subsection{An Optimal String Encoding}

As a first consequence of Theorems \ref{th: Monte Carlo structure} and \ref{th: Las Vegas structure}, in Theorem \ref{th:encoding} we obtain a space-optimal encoding (modulo $O(\log n)$ word constants not depending on $S$)  supporting constant-time random access and substring equality queries.

\begin{theorem}\label{th:encoding}
	There is an encoding representing any string $S$ of length $n$ over alphabet of size $\sigma\leq n^{\bigO(1)}$ in $n\log\sigma + \Theta(\log n)$ bits of space and supporting packed random access and substring equality queries in $\bigO(1)$ worst-case time. The encoding requires $O(\log n)$ precomputed word constants. 
\end{theorem}
\begin{proof}
	Let $S'=\lambda_{q, \seed, \tau}(S)$ be a correct eq-transform for $S$ computed using Theorem \ref{th: Las Vegas structure}, in the fast version requiring $O(\log n)$ additional word constants.
	Our encoding is $\mathcal E_{CP}(S) = \langle q, \seed, \tau, n, \sigma, S' \rangle$. We encode the integers $q, \seed, \tau, n$ using any optimal prefix-free encoding (e.g. Elias delta). We encode $S'$ as follows. Let $S''$ be the string of length $n/\tau$ over the alphabet $[0,\sigma^{\tau}-1]$ formed by representing the blocks of length $\tau$ of $S'$ as numbers in $[0,\sigma^{\tau}-1]$. We store $S''$ with the string representation of Dodis et al.~\cite{dodis2010changing}. This representation takes $\lceil(n/\tau)\log(\sigma^{\tau})\rceil = \lceil n\log\sigma\rceil$ bits of space and supports the extraction of any block of $\Theta(\log_\sigma n)$ packed characters of the original string $S'$ in $\packedtime = O(1)$ time. Also this string representation, similarly to ours, requires the storage of $O(\log n)$ word constants. 
	This is sufficient to achieve constant-time complexity for all operations as described in Theorem \ref{th: Las Vegas structure}. Overall, the size of $\mathcal E_{CP}(S)$ is of $n\log\sigma + \Theta(\log n)$ bits plus $O(\log n)$ word constants. 
\end{proof}

From the theoretical perspective, it is interesting to note that the above encoding could be re-designed to not use the seed $\seed$ at all: we can replace $\rks$ by $\rk$ and store the full fingerprints of every sampled string prefix. The reason is that, as we show next, the representation of Dodis et al.~\cite{dodis2010changing} accommodates for an alphabet increase from $\sigma^\tau$ (the alphabet we use in Theorem \ref{th:encoding} for each block) to $q$ with just $1+O(1/\tau)$ bits of additional space usage \emph{in total}. In particular, an alphabet of size $q$ allows storing inside each block of $S'$ the full fingerprint of the corresponding text prefix, without discarding any digit as done in the original Definition \ref{def:LCE transform}. To see why this is true, note that in this case the representation of Dodis et al.~\cite{dodis2010changing} uses $\lceil (n/\tau)\log q \rceil \leq \lceil (n/\tau)\log \sigma^t \rceil + (n/\tau)\log (q/\sigma^t) + 1$ bits, that is, at most $1+(n/\tau)\log (q/\sigma^t)$ bits on top of $\lceil n\log\sigma\rceil$. Since $q\leq \sigma^\tau \cdot n/(n-1)$, we have $1 + (n/\tau)\log(q/\sigma^\tau) \leq 1+ (1/\tau)\cdot n\log(n/(n-1)) = 1 + O(1/\tau)$. 

\section{Applications to Sparse Text Indexing}\label{sec:in-place results}

In this section we use our new data structures to solve various string processing problems in small space: LCE queries, sparse suffix sorting, sparse LCP array construction, and suffix selection.

\subsection{In-place LCE Queries}

A Longest Common Extension (LCE) query takes as input two suffixes $i$ and $j$ and returns the length of their longest common prefix. 
Note that, given the ability of solving substring equality queries, it is easy to solve LCE queries with exponential search followed by binary search. This strategy requires answering to $\bigO(\log \ell)$ substring equality queries between substrings of length at most $\ell$, where $\ell$ is the final result of the LCE query.

Remember the two space-time trade-offs offered by the structures of Theorems \ref{th: Monte Carlo structure} and \ref{th: Las Vegas structure}.
With \emph{slow LCE queries} on our data structure we denote queries running in $\bigO((\log \ell+\packedtime)\log\ell)$ time and in $\bigO(1)$ words of space on top of the input string. With \emph{fast LCE queries} we denote those running in $\bigO(\packedtime\cdot \log \ell)$ time and requiring $O(\log n)$ word constants of $O(\log n)$ bits on top of the input string. 

In some of our results below, we obtain the space to support fast LCE queries, as well as to derandomize our algorithms, by compressing integer sequences. For this, we adapt a lemma of Franceschini et al.~\cite{radixsort} to work with general integer sequence representations (for example, our lemma works with the representation of Dodis et al.~\cite{dodis2010changing}):


\begin{lemma}\label{lem:compress}
	Let $T$ be a sorted sequence of $k\leq n$ distinct integers from the range $[0,n-1]$ supporting constant-time retrieval and modification of any of its elements. 
	In $O(k)$ time and using $O(1)$ words of working space on top of $T$, we can compress the elements of $T$ so that $\Theta(k\log k/\log n)$ contiguous cells of $T$ are made available. The compressed sequence can be decompressed, restoring the original $T$, in $O(k)$ time and $O(1)$ words of working space.
\end{lemma}
\begin{proof}
	We adapt the approach of Franceschini et al.~\cite{radixsort} in order to remove the assumption that the $k$ integers are represented using $\lceil \log n\rceil$ bits each. 
	We scan the sequence and identify the largest index $i$ such that $T[i] < \lceil (n-1)/2 \rceil$. We store $i$ in one memory word. If $i > k/2$, then we compress the $\Theta(k)$ integers $T[0,i]$ as described below. Otherwise, we subtract $\lceil (n-1)/2 \rceil$ from all integers $T[i+1,k]$ and compress this sequence.
	Let $T[a,b]$ be the sequence to be compressed.
	We divide the sequence in contiguous non-overlapping blocks of length $B = \lceil \log n\rceil+1$ (without loss of generality, assume that $B$ divides the length $b-a+1$ of the sequence).
	Let $T[j,j+B-1]$ be such a block.
	The idea is to "distribute" the last integer $T[j+B-1]$ among the other integers of the block.	
	Let $x_0, \dots, x_{\lceil \log n \rceil-1}$ be the bit-representation of  $T[j+B-1]$.
	The $t$-th most significant bit $x_t$ of $T[j+B-1]$ is stored by overwriting $T[j+t] \leftarrow T[j+t]\cdot 2 + x_t$. 
	Note that $T[j+t]<\lceil (n-1)/2 \rceil$, therefore the result is strictly smaller than $n$ and the operation does not overflow. We compact the resulting sequence (in place and $O(k)$ time) by shifting to its beginning each cell $T[j+B-1]$ that has been freed. 
	Denote as $T[a,a+q-1]$ the $q\in\Theta(k/\log n)$ free cells and by $T[a+q,b]$ the compacted sequence. 
	The free space $T[a,a+q-1]$ can now be used to emulate a bitvector $X[0,2m]$ of length (number of bits) $2m = 2c\cdot k$, for some constant $c>0$.
	Note also that, since we assumed all input integers to be distinct, by the way we constructed it we  have that $T[a+q,b]$ is strictly increasing. Without loss of generality, assume that $m$ divides $n$. 
	Let $t(x) = x\ \mathtt{div}\ (n/m)$: integer $x<n$ belongs to the range $[t(x)\cdot n/m, (t(x)+1)\cdot n/m-1]$.
	Since $T[a+q,b]$ is nondecreasing, we have that $T' = t(T[a+q]), \dots, t(T[b])$ is a nondecreasing sequence in the range $[0,m-1]$. We can thus encode this sequence in the bitvector $X$ by storing in unary the differences between consecutive elements of $T'$. As a result, we can replace $T[j] \leftarrow T[j]\ \mathtt{mod}\ (n/m)$ for all $a+q\leq j \leq b$. The new sequence $T[a+q,b]$ and the bitvector $X$ are enough to restore the original content of $T[a+q,b]$. Since each such $T[j]$ requires $\Theta(\log (n/m))$ bits to be stored, $\Theta(\log m)$ bits of information become available for each $a+q\leq j \leq b$. It follows that, similarly to what we did before, one out of $\Theta(\log n/\log m) = \Theta(\log n/\log k)$ elements of the sequence $T[a+q,b]$ can be "distributed" among its neighbors, thus freeing $\Theta(k \log n/\log k)$ cells (we skip the details for simplicity). These free cells can easily be made contiguous in-place and $O(k)$ time. The whole procedure, as well as its inverse (for decompressing the sequence) runs in-place and $O(k)$ time.
\end{proof}

Even if in Lemma \ref{lem:compress} we do not specify the constant $\delta$, such constant can  be explicitly  calculated after the free space has been made available by measuring it and dividing it by the quantity $k\log k$ (this will be needed later).

\subsection{Monte Carlo In-place SSA and SLCP Construction}

By combining our Monte Carlo data structure with in-place comparison sorting, in-place comparison-based merging~\cite{SS1987}, and compression of integer sequences, in Theorem \ref{th:suffix-sort} we  obtain the first sublinear-time  in-place sparse suffix sorting algorithm.

\begin{theorem}\label{th:suffix-sort}
	Let $S\in \Sigma^n$, with $|\Sigma| \leq n^\alpha$ for some constant $\alpha$, be a string representation supporting the extraction and replacement of blocks of $\Theta(\log_\sigma n)$ contiguous characters of $S$ in  time $\packedtime$.
	Any sequence $B$ of $b$ suffixes of $S$ supporting constant-time access and update queries
	can be suffix-sorted in  $\bigO((n/\log_\sigma n+b\log^2 n)\cdot \packedtime)$ worst-case time using $\bigO(1)$ words of working space on top of $S$ and $B$. The result is correct with probability at least $1-n^{-c}$ for an arbitrarily large constant $c\geq 5/3$ fixed beforehand. The string $S$ is restored before the computation terminates.
\end{theorem}
\begin{proof}
	We first build  our Monte Carlo structure in $\bigO(\packedtime\cdot n/\log_\sigma n)$ time using Theorem \ref{th: Monte Carlo structure}.
	If $b\leq n/\log^4 n$, then we suffix-sort the $b$ string positions plugging slow LCE queries in any in-place comparison sorting algorithm terminating within $\bigO(b\log b)$ comparisons. This requires $\bigO(\packedtime\cdot b\log b\log^2 n)\in\bigO(\packedtime\cdot n/\log n)$ time and $O(1)$ working space. 
	If $b> n/\log^4 n$, then we divide  $B$  into two sub-sequences $B' = B[1,\dots,n/\log^4 n]$ and $B'' = B[n/\log^4 n+1,\dots, b]$ and: 
	\begin{itemize}
		\item Using Lemma \ref{lem:compress} and any fast in-place comparison-based sorting algorithm, we sort and compress $B'$ in-place and in  $O(|B'|\log|B'|) = \bigO((n/\log^4n)\cdot \log(n/\log^4n)) = \bigO(n/\log^3n)$ time. This frees $\omega(\log n)$ integers in the range $[0,n-1]$ of space. We store the $O(\log n)$ word constants required to support fast LCE queries in this space.
		\item We suffix-sort in-place (comparison-based sorting) $B''$ using fast LCE queries. This step terminates in $\bigO(\packedtime\cdot b\log b\log n) \subseteq \bigO(\packedtime\cdot b\cdot \log^2 n)$ time. 
		\item We decompress in-place $B'$ using again Lemma \ref{lem:compress}. Now our structure supports only slow LCE queries.
		\item We suffix sort in-place (comparison-based sorting) $B'$ using slow LCE queries. This step terminates in $\bigO(\packedtime \cdot (n/\log^4 n)\cdot \log(n/\log^4 n)\cdot \log^2 n)\in\bigO(\packedtime\cdot n/\log n)$ time.
		\item We merge $B'$ and $B''$ using slow LCE queries and in-place comparison-based merging~\cite{SS1987}. This step terminates in $\bigO(\packedtime \cdot b\log^2 n)$ time.
		\item Using Theorem \ref{th: Monte Carlo structure}, we invert our data structure and restore $B$ in $\bigO(\packedtime\cdot n/\log_\sigma n)$ time.
	\end{itemize}
	Overall, we use $O(1)$ working space on top of $S$ and $B$. The total running time is bounded by $\bigO((n/\log_\sigma n+b\log^2 n)\cdot \packedtime)$. By Theorem \ref{th: Monte Carlo structure}, the transform (and hence the result) is correct with high probability.
\end{proof}

Note that the generality of our theorem allows using the representation of Dodis et al.~\cite{dodis2010changing} for both $S$ and $B$. In this case, the total space used by the algorithm is just $n\log\sigma + b\log n + \Theta(\log n)$ bits.

We can add a further step to the above-described procedure: after having built the SSA, we can overwrite it with the SLCP array by replacing adjacent SSA entries with the length of their longest common prefix. This step runs in-place and in $\bigO(\packedtime\cdot b\log^2n)$ time using slow LCE queries. This proves:

\begin{theorem}\label{th:sparse LCP}
	Let $S\in \Sigma^n$, with $|\Sigma| \leq n^{O(1)}$, be a string representation supporting the extraction and replacement of blocks of $\Theta(\log_\sigma n)$ contiguous characters of $S$ in  time $\packedtime$.
	Given a sequence $B$ of $b$ suffixes of $S$ supporting constant-time access and update queries, we can replace $B$ with the sparse LCP array relative to $B$ in  $\bigO((n/\log_\sigma n+b\log^2 n)\cdot \packedtime)$ worst-case time using $\bigO(1)$ words of space on top of $S$ and $B$. The result is correct with probability at least $1-n^{-c}$ for an arbitrarily large constant $c\geq 5/3$ fixed beforehand. The string $S$ is restored before the computation terminates.
\end{theorem}

The sparse suffix tree can easily be computed in optimal $\bigO(b)$ space once the SSA and SLCP arrays are available. We therefore obtain the corollary: 

\begin{corollary}\label{th:sparse ST}
	The sparse suffix tree can be computed in $\bigO((n/\log_\sigma n+b\log^2 n)\cdot \packedtime)$ worst-case time using $\bigO(b)$ words of space on top of $S$. The result is correct with high probability.
\end{corollary}

For $b\in o(n/(\packedtime\cdot \log^2n))$ and $\packedtime\in o(\log_\sigma n)$ (that is, packed operations are available), the above algorithms run in sublinear $o(n)$ time.

\subsection{Las Vegas In-place SSA and SLCP Construction}

In order to derandomize Theorems \ref{th:suffix-sort} and \ref{th:sparse LCP}, thus obtaining Las Vegas algorithms, we combine two orthogonal strategies: if the input set $B$ of suffixes is large, then we use our deterministic structure of Theorem \ref{th: Las Vegas structure}. Otherwise, we use our Monte Carlo structure of Theorem \ref{th: Monte Carlo structure} and check correctness of the result. 

\begin{theorem}\label{th:suffix-sort las vegas}
	Let $S\in \Sigma^n$, with $|\Sigma| \leq n^\alpha$ for some constant $\alpha$, be a string representation supporting the extraction and replacement of blocks of $\Theta(\log_\sigma n)$ contiguous characters of $S$ in  time $\packedtime$.
	Any sequence $B$ of $b$ suffixes of $S$ supporting constant-time access and update queries
	can be suffix-sorted in  $\bigO(\packedtime\cdot \min\{b\cdot n/\log_\sigma n,n^2\log n/b\}) \subseteq O(\packedtime \cdot n^{1.5}\sqrt{\log \sigma})$ worst-case time using $\bigO(1)$ words of working space on top of $S$ and $B$. The running time holds with probability at least $1-n^{-c}$ for an arbitrarily large constant $c\geq 5/3$ fixed beforehand, and the returned result is always correct. The string $S$ is restored before the computation terminates.
\end{theorem}
\begin{proof}
	If $b \geq \sqrt{n/\log \sigma}\cdot \log n$, then in $O(b\log b)$ time we 
	sort $B$ with any fast in-place comparison-based sorting algorithm and then we	use Lemma \ref{lem:compress} on $B$ to free $\delta\cdot b\log b/\log n \geq (\delta/2)b$ cells of $B$ in $O(b)$ further time. 
	We use this space to build our deterministic data structure of Theorem \ref{th: Las Vegas structure} with parameter $K = (\delta/2)b \in \Omega(n^{0.5})$. The structure is built in $O(\packedtime\cdot (n^2\log n/b + n/\log_\sigma n))$ worst-case time with probability at least $1-n^{-c}$ for any constant $c\geq 5/3$ fixed beforehand using $O(1)$ words of working space on top of $S$ and $B$. 
	After the structure has been built, we decompress $B$ using again Lemma \ref{lem:compress}.
	We then apply verbatim the procedure described in the proof of Theorem \ref{th:suffix-sort} using our deterministic data structure. The resulting sparse suffix array is always correct. 
	
	If $b < \sqrt{n/\log \sigma}\cdot \log n$, then we compute the sparse suffix array using Theorem \ref{th:suffix-sort}. At this point, we check the result simply by comparing adjacent suffixes (and verifying that they are in the correct lexicographic order) in $O(\packedtime \cdot b\cdot n/\log_\sigma n)$ time using random access queries on $S$. If necessary, we repeat suffix sorting until obtaining a correct result. Since Theorem \ref{th:suffix-sort} yields a correct result with high probability, this procedure runs in $O(\packedtime \cdot b\cdot n/\log_\sigma n)$ worst-case time with high probability and always yields the correct result. 
	
	A quick calculation shows that when $b = \sqrt{n/\log \sigma}\cdot \log n$ the running times of the two procedures (derandomization + suffix sorting / suffix sorting + verification) match and are equal to $O(\packedtime \cdot n^{1.5}\sqrt{\log \sigma})$. Our claim follows easily.
\end{proof}

As noted before Theorem \ref{th:sparse LCP}, after having built the SSA we can overwrite it with the SLCP array by replacing adjacent SSA entries with the length of their longest common prefix. 
The only difference now is that for $b < \sqrt{n/\log \sigma}\cdot \log n$ we compare directly the suffixes using random access in $\bigO(\packedtime\cdot b\cdot n/\log_\sigma n)$ time. For larger $b$, we use slow LCE queries. We obtain:

\begin{theorem}\label{th:sparse LCP las vegas}
	Let $S\in \Sigma^n$, with $\sigma = |\Sigma| \leq n^\alpha$ for some constant $\alpha$, be a string representation supporting the extraction and replacement of blocks of $\Theta(\log_\sigma n)$ contiguous characters of $S$ in  time $\packedtime$.
	Given a sequence $B$ of $b$ suffixes of $S$ supporting constant-time access and update queries, we can replace $B$ with the sparse LCP array relative to $B$ in  $\bigO(\packedtime\cdot \min\{b\cdot n/\log_\sigma n,n^2\log n/ b\}) \subseteq O(\packedtime \cdot n^{1.5}\sqrt{\log \sigma})$ worst-case time using $\bigO(1)$ words of space on top of $S$ and $B$. The running time holds with probability at least $1-n^{-c}$ for an arbitrarily large constant $c\geq 5/3$ fixed beforehand, and the returned result is always correct. The string $S$ is restored before the computation terminates.
\end{theorem}

When the full LCP array has to be computed (that is, $b=n$), the above theorem yields: 

\begin{theorem}\label{th_LCP}
	Let $S\in \Sigma^n$, with $|\Sigma| \leq n^\alpha$ for some constant $\alpha$, be a string representation supporting the extraction and replacement of blocks of $\Theta(\log_\sigma n)$ contiguous characters of $S$ in  time $\packedtime$.
	The Longest Common Prefix array ($LCP$) of $S$ can be computed in $\bigO(\packedtime\cdot n \log n)$ worst-case time using $\bigO(1)$ words of space on top of $S$ and the $LCP$ array itself (the latter stored using any sequence representation supporting constant-time access and update queries). The running time holds with probability at least $1-n^{-c}$ for an arbitrarily large constant $c\geq 5/3$ fixed beforehand, and the returned result is always correct. The string $S$ is restored before the computation terminates.
\end{theorem}

\subsection{In-place Suffix Selection}

The selection problem on integers admits fast in-place solutions in the read-only model~\cite{MUNRO1996311}. The analogous in-place suffix selection problem is, however, still open.  
In Theorem \ref{th:suf sel} we provide the first in-place sub-quadratic algorithm solving this problem in the restore model: given a rewritable string $S$ (to be restored before the computation terminates) and an index $0\leq i <n$, output the string position corresponding to the $i$-th lexicographically smallest suffix. 
Our idea is to use a variant of the quick-select algorithm (i.e. solving the selection problem on integers) operating with the lexicographical ordering of the suffixes. 
We note that the algorithm in~\cite{crochemore2013constant}---replacing in-place and quadratic time the string with its Burrows-Wheeler transform---can be easily adapted to solve this problem in-place and $\bigO(n^2)$ time: after building the BWT, we apply the LF mapping from position $i$ until reaching the BWT terminator character. The number of LF steps is equal to the length of the $i$-th smallest string suffix. Each LF step runs in $\bigO(n)$ time (i.e. a simple scan), therefore the overall algorithm runs in-place and $\bigO(n^2)$ time. Theorem \ref{th:suf sel} reduces this time to $\bigO(\packedtime\cdot n\log^4n)$ with high probability:

\begin{theorem}\label{th:suf sel}
	Let $S\in \Sigma^n$, with $|\Sigma| \leq n^\alpha$ for some constant $\alpha$, be a string representation supporting the extraction and replacement of blocks of $\Theta(\log_\sigma n)$ contiguous characters of $S$ in  time $\packedtime$.
	Given an integer $i<n$ we can find the $i$-th lexicographically smallest suffix of $S$ in $\bigO(\packedtime\cdot n\log^4n)$ worst-case time using $\bigO(1)$ words of working space on top of $S$. The result is correct with high probability. The string $S$ is restored before the computation terminates.
\end{theorem}
\begin{proof}
	We build  our Monte Carlo structure using  Theorem \ref{th: Monte Carlo structure}. In our procedure below, we use slow LCE queries to compare pairs of suffixes lexicographically.
	We scan $S$ and find the lexicographically smallest and largest suffixes $i_{\min}$, $i_{\max}$ in $\bigO(n\log^2n)$ time: starting with $i_{\min} = i_{\max} = 0$, for every $0\leq j < n$ we compare the $j$-th suffix with the $i_{\min}$-th and $i_{\max}$-th suffixes and determine whether $j$ is the new $i_{\min}$ or $i_{\max}$. Then, we scan again $S$ and count the number $m$ of suffixes inside the lexicographic range $[i_{\min}, i_{\max}]$ (at the beginning, $m=n$). 
	We pick a uniform random number $r$ in $[1,m]$, scan again $S$, and select the $r$-th suffix $i_r$ falling inside the lexicographic range $[i_{\min}, i_{\max}]$ that we see during the scan. 
	Let $R$ be the lexicographic range of suffixes --- either $[i_{\min},i_r)$ or $[i_r,i_{\max}]$ --- the $i$-th smallest suffix belongs to. 
	In order to guarantee $O(\log n)$ recursive calls in the worst case, we need $R$ to be of size at most $m/2$. By repeating the procedure $O(\log n)$ times, we will find such $i_r$ with high probability. If we do not find it, we just stop the algorithm and return an arbitrary result (possibly, wrong). We perform at most $\bigO(\log n)$ recursive calls, therefore our algorithm runs in $\bigO(n\log^4n)$ worst-case time and returns the correct result with high probability.	
\end{proof}

\section{Conclusions}

We have presented a new in-place data structure supporting efficient substring equality queries.
The string can be replaced with our data structure, as well as restored, in-place and in optimal packed time. We also showed how to  derandomize the data structure using small additional working space. 
Our structure represents a powerful tool for solving in-place a wide range of string processing problems: using this technique we provided the first in-place and subquadratic-time (indeed, even sublinear in some cases) algorithms for the sparse suffix sorting, sparse LCP construction, and suffix selection problems. Using particular input representations, our suffix sorting and LCP construction algorithms use just $n\log\sigma + b\log n + \Theta(\log n)$ bits of \emph{total} space (that is, including input, output, and working space). 

Our work leaves several intriguing open problems. First of all, can we remove the need of storing $\Theta(\log n)$ word constants in order to support fast substring equality (or even just LCE) queries? Other major open problems are whether our string processing algorithms can be derandomized in-place with a faster procedure (can we suffix sort an arbitrary set of suffixes in-place and $o(n^{1.5})$ time, always returning the correct result?), and whether we can achieve the same time-space bounds in the read-only model (i.e. where the input string is not allowed to be overwritten). 
Suffix-sorting an arbitrary subset of the string's suffixes in subquadratic time using $\bigO(1)$ space and without overwriting the original string seems to be a hard task; we conjecture that $\omega(1)$ space is needed in this model of computation in order to achieve subquadratic running time.

\subsection*{Acknowledgements}

I would like to thank the anonymous reviewers: their observations greatly improved the presentation of the results, as well as some of the results themselves (in particular, w.h.p. time bounds and working space of the derandomization procedure).

\bibliographystyle{ACM-Reference-Format}
\bibliography{paper}

%
%
 \bibliographystyle{splncs04}
 \bibliography{paper}

\end{document}